\pgfplotsset{compat=1.14} 
\newcommand{\bx}{{ x}}
\newcommand{\bu}{{ u}}
\newcommand{\by}{{y}}
\newcommand{\bz}{{ z}}
\newcommand{\bv}{{ v}}
\newcommand{\bw}{{ w}}
\newtheorem{remark}{Remark}
\title{\LARGE\bf Scalable Synthesis of Minimum-Information Linear-Gaussian Control by Distributed Optimization}
\begin{document}

\author{Murat Cubuktepe, Takashi Tanaka, and Ufuk Topcu%
	\thanks{ Murat Cubuktepe, Takashi Tanaka, and Ufuk Topcu are with the Department of Aerospace Engineering
and Engineering Mechanics, University of Texas, Austin, 201 E 24th
St, Austin, TX 78712. email: {\tt\small $\{$mcubuktepe, ttanaka, utopcu$\}$@utexas.edu}}}
\maketitle%

\begin{abstract}
We consider a discrete-time linear-quadratic Gaussian control problem in which we minimize a weighted sum of the directed information from the state of the system to the control input and the control cost.  
The optimal control and sensing policies can be synthesized jointly by solving a semidefinite programming problem. 
However, the existing solutions typically scale cubic with the horizon length.
We leverage the structure in the problem to develop a distributed algorithm that decomposes the synthesis problem into a set of smaller problems, one for each time step.
We prove that the algorithm runs in time linear in the horizon length.
As an application of the algorithm, we consider a path-planning problem in a state space with obstacles under the presence of stochastic disturbances.
The algorithm computes a locally optimal solution that jointly minimizes the perception and control cost while ensuring the safety of the path.
The numerical examples show that the algorithm can scale to thousands of horizon length and compute locally optimal solutions.
\end{abstract}


\section{Introduction}

We revisit the problem of minimum-information control of linear-Gaussian systems~\cite{borkar1997lqg,tatikonda2004stochastic,silva2015characterization,yuksel2013jointly,tanaka2017lqg,tanaka2017semidefinite,kostina2019rate}, where the trade-off between the best achievable control performance and the required sensor data rate is studied.
Such a trade-off is relevant to the utility-privacy trade-off in multi-party control systems~\cite{huang2012differentially,manitara2013privacy,mo2016privacy,wang2014entropy}. 
It also plays a crucial role in the network control systems design~\cite{wang2017differential,cortes2016differential,wang2014entropy}.

The work most related to this paper is~\cite{tanaka2017semidefinite},  where the authors formulated the minimum-information linear-Gaussian control problem as a sensor-controller joint design problem.
They showed that the optimal controller can be obtained as the solution to Riccati equations, while the optimal sensor is obtained as the solution to the so-called Gaussian sequential rate-distortion (SRD) problem~\cite{tanaka2017lqg,stavrou2018optimal}.
They further showed that the Gaussian SRD problem can be formulated as a semidefinite programming problem (SDP), which can be solved by interior-point methods in polynomial time~\cite{ben2001lectures,boyd2004convex}.
However, the computation typically requires $\mathcal{O}(T^3)$ time with horizon length $T$ if the structure of the SDP is not exploited, and an interior-point method may not scale to large horizon lengths. 

Our first contribution in this work is to exploit the structure of the Gaussian SRD problem to derive a distributed algorithm, which facilitates the sensor design for a large horizon length.  
We propose a distributed algorithm based on ADMM~\cite{boyd2011distributed,he2000alternating,wang2001decomposition,hong2017linear,o2015adaptive}, which allows us to design sensors on problems with large horizon length. 
The structure in the problem enables us to develop a distributed algorithm that, given a linear time-varying dynamical system and data rate constraints, decomposes the original SDP problem to smaller SDP problems, one for each time step.
The algorithm then solves the smaller SDP problems in parallel at each iteration.
We prove that our algorithm runs in time linear with the horizon length.


Our second contribution is to apply our algorithm to the minimum-sensing path-planning problem, which incorporates the perception cost into a path-planning problem~\cite{paden2016survey,karaman2011sampling,yang2014literature,elbanhawi2014sampling,richards2002spacecraft,richards2015inter}.
A path-planning problem with a similar spirit has been recently considered in~\cite{stefan2020rationally}.
Conventionally, path planning is performed by a global path search (via grid-based algorithms such as A*~\cite{hart1968formal} and randomized algorithms such as RRT~\cite{lavalle2006planning,karaman2011sampling}), followed by path-smoothing and feedback control design for path following.
In this paper, we provide a path-smoothing algorithm that refines a given initial trajectory by computing a locally optimal solution.
Our algorithm is closely related to the convex-concave procedure~\cite{lipp2016variations}, which iteratively finds a locally optimal solution to a difference of convex (DC) problem and to~\cite{stefan2020rationally}, which optimizes a similar metric to our case in a path-planning problem using an RRT*-based method. 
We formulate the path-smoothing problem with perception cost as a DC problem. 
Our formulation is convex, except the task constraints, which is to avoid colliding with obstacles.
We express the task constraints as reverse convex constraints, which can be used in a DC problem, and a locally optimal solution can be computed.

We show the effectiveness of the algorithms in two numerical examples. 
To demonstrate our first contribution, we consider attitude determination for a spin-stabilized satellite example.
We compute the minimum required information in order to satisfy the required distortion constraints with a substantial horizon length. 
To demonstrate the second contribution, we solve a path-smoothing problem with multiple obstacles in a two-dimensional state space. 
The objective is to find a path that is feasible and minimizes the required joint control and perception cost.
We show that we can find a locally optimal path within a few iterations.

\textit{Organization:}  Section II introduces the Gaussian SRD problem and the semidefinite programming formulation for the linear-Gaussian sensor design problem. 
We propose a distributed algorithm based on ADMM in Section III that solves the linear-Gaussian sensor design problem in linear time with the horizon length. 
We discuss the convergence rate of the algorithm and improvements over the standard ADMM.
Section IV provides the application of our algorithm in a path-planning problem.
Section V provides two examples to show the validity of our proposed algorithm.
Section VI concludes the paper and discusses future directions.

\section{Problem Formulation and Preliminaries}

\emph{Notation.} We denote vector-valued variables at time step $t$ by $x_t \in \mathbb{R}^n$.  $\bx(T)$ is the collection of all $x_0\ldots x_T$. $A \succ 0 \;( \succeq 0 )$ denotes that the matrix $A \in \mathbb{R}^{n \times n}$ is positive definite (positive semidefinite). We denote a Gaussian distribution with mean $\mu$ and covariance $\Sigma$ by $\mathcal{N}(\mu,\Sigma)$. The log-determinant of a positive definite matrix $A$ is given by logdet $A$ and the trace of a matrix $A$ is given by Tr$(A)$. For a positive semidefinite matrix A, we define $\| x \|_{A}=\sqrt{x^{\top}Ax}.$ 

\subsection{Minimum-Information Linear-Gaussian Control}

\label{secpsrdformulation}
The model that we consider in this paper is the following linear time-varying system%
\begin{align}
\label{gmprocess}
\bx_{t+1}=A_t \bx_t+B_t u_t + \bw_t, \;\; t=1, 2, \ldots, T-1,
\end{align}%
where $\bx_0 \in \mathbb{R}^n\sim\mathcal{N}(0, P_0), P_0 \in \mathbb{R}^{n \times n}\succ 0$ and $\bw_t\in \mathbb{R}^n\sim \mathcal{N}(0, W_t), W_t \in \mathbb{R}^{n \times n} \succ 0, t=1, 2, \ldots, T$ are mutually independent Gaussian random variables, $u_t \in \mathbb{R}^m, t=1, 2, \ldots, T-1$  are the control inputs, and $A_t \in \mathbb{R}^{n \times n}$, $B_t \in \mathbb{R}^{n \times m}, t=1, 2, \ldots, T-1$ define the dynamics.

 The minimum-information linear-Gaussian control problem is formulated as
\begin{align}
&\text{minimize} \quad \;\; I(\bx(T) \rightarrow u(T)) \label{psrd1}\\
&\text{subject to } \quad \mathbb{E}\|\bx_t\|^2 + \|u_t\|^2 \leq \gamma_t \label{psrd2},
\end{align}%
where the minimization is over the causal policies (stochastic kernels) of the form $\{p(u_t|x(t),u(t-1))\}_{t=1, 2, \ldots, T}$. Positive constants $\gamma_t, t=1, 2, \ldots , T$ specify user-defined requirements on instantaneous control costs, and the term $I(\bx(T)\rightarrow \bu(T))$ is known as \emph{directed information} \cite{massey1990causality}:
\begin{align}
\label{eqdirectedinfo}
I(\bx(T)\rightarrow \bu(T))\triangleq \sum\nolimits_{t=1}^T I(\bx(t);\bu_t|\bu(t-1)),
\end{align}%
where $I(\bx(t);\bu_t|\bu(t-1))$ is the conditional mutual information.
It can be shown \cite{tanaka2017lqg} that the optimal policy for \eqref{psrd1}-\eqref{psrd2} can be realized by a three-stage structure comprised of
\begin{itemize}
    \item[(i)] a linear sensor mechanism%
    \begin{align}
    \label{srdchannel}
    \by_t=C_t\bx_t+\bv_t, \;\; t=1, 2, \ldots, T,
    \end{align}%
    where $\bv_t\sim\mathcal{N}(0,V_t), V_t\succ 0$ are mutually independent Gaussian random variables;
    \item[(ii)] the least mean square error estimator (Kalman filter)%
    \begin{equation}
    \label{eq:KF}
    z_t= \mathbb{E}(x_t|y(t)), \quad t=1, 2, \ldots, T \text{; and}
    \end{equation}%
    \item[(iii)] the certainty equivalence controller $u_t=K_tz_t$.
\end{itemize}
Observing that the optimal control gain $K_t$ in (iii) can be pre-calculated by solving a backward Riccati equation, the problem~\eqref{psrd1}--\eqref{psrd2} can be reduced to the optimal sensor design problem~\cite{tanaka2017lqg}
\begin{align}
&\text{minimize} \quad \;\; I(\bx(T) \rightarrow z(T)) \label{1_psrd1}\\
&\text{subject to } \quad \mathbb{E}\|\bx_t-\bz_t \|^2_{\Theta_t}\leq \gamma_t-c_t (\coloneqq D_t) \label{1_psrd2},
\end{align}%
where the minimization is over the sensor mechanism \eqref{srdchannel} (decision variables are matrices $C_t$ and $V_t$) and $z_t$ is computed by \eqref{eq:KF}. Constants $\Theta_t\succ 0$ and $c_t$ are obtained from the solution to the aforementioned backward Riccati equation.
The problem~\eqref{1_psrd1}--\eqref{1_psrd2} is known as the Gaussian sequential rate-distortion (SRD) problem \cite{tatikonda2004control} (also known as the nonanticipative rate-distortion problem \cite{stavrou2018optimal}).

\subsection{SDP Formulation of the Gaussian SRD Problem}

It can be further shown (see \cite{tanaka2017semidefinite} for the derivation) that the Gaussian SRD problem \eqref{1_psrd1}-\eqref{1_psrd2} can be written as a SDP
\begin{align}
    \displaystyle & \text{minimize}\quad - \sum\nolimits_{t=1}^T \text{logdet}~\Pi_{t}\label{eq:obj}\\
    &\text{subject to} \quad P_{T|T}=\Pi_{T}, \Pi_{t} \succ 0, \quad t=1, 2, \ldots, T,\label{eq:cons1}\\
    &\hspace{-0.20cm}\begin{bmatrix}
    P_{t|t}-\Pi_{t} & \hspace{-0.1cm} P_{t|t}A^{\top}_t\\
    A_t P_{t|t} & \hspace{-0.1cm} W_t + A_t P_{t|t}A^{\top}_t\label{eq:cons2}
    \end{bmatrix}\succeq 0, \; t=1, 2, \ldots,T-1,\\
    &\text{Tr}(\Theta_{t}P_{t|t}) \leq D_t, \quad t=1, 2, \ldots,T-1,\label{eq:cons3}\\
    &P_{t+1|t+1}\preceq A_t P_{t|t}A^{\top}_t +W_t, \quad t=1, 2, \ldots, T-1, \label{eq:cons}
\end{align}%
where $P_{t|t}, \Pi_t \in \mathbb{R}^{n \times n}$ are the problem variables, and $A_t, W_t, D_t, \Theta_t$ are given problem data. Once the optimal solution $\{P_{t|t}\}_{t=1, 2, \ldots , T}$ to \eqref{eq:obj}--\eqref{eq:cons} is obtained, the optimal sensor mechanism \eqref{srdchannel} can be obtained by choosing the matrices $C_t$ and $V_t$ to satisfy
\begin{align*}
    P^{-1}_{t|t}-(A_{t-1} P_{t-1|t-1}A^{\top}_{t-1} +W_{t-1})^{-1}=C_t^{\top}V^{-1}_tC_t.
\end{align*}%

\subsection{Problem Statement}%
Reference~\cite{tanaka2017lqg} notes that solving the SDP~\eqref{eq:obj}--\eqref{eq:cons} typically requires $\mathcal{O}(T^3)$ time. 
However, that generally holds if there is no sparsity pattern in the SDP problem that can be exploited.
We note that the only coupling constraints between different time-steps in the SDP~\eqref{eq:obj}--\eqref{eq:cons} is the constraint in~\eqref{eq:cons}, and we propose a method to decouple these constraints for different time-steps, which facilitates solving the linear-Gaussian sensor design problem in $\mathcal{O}(T)$ time. Specifically, we solve the following problem.%
\begin{problem}
Given the dynamics $A_t,$ the Gaussian noise matrices $W_t,$ and coefficients $D_t$ for $t=1,\ldots,T$, derive an algorithm that solves the SDP~\eqref{eq:obj}--\eqref{eq:cons} in $\mathcal{O}(T)$ time. 
\end{problem}%
In addition to solving Problem~1 in Section~\ref{sec:admm}, we also discuss how we utilize our methods for the Gaussian SRD problem to a path-planning problem in Section~\ref{sec:path}.

\section{Distributed Sensor Design Using ADMM}\label{sec:admm}%
In this section, we derive our distributed algorithm for the linear-Gaussian sensor design problem.
We introduce alternating direction method of multipliers (ADMM)~\cite{boyd2011distributed}, which is frequently used in distributed optimization. 
We then derive our formulation to decouple the constraints in~\eqref{eq:cons} for different time steps.
We prove that by construction, our algorithm also runs in time linear with the horizon length $T$. 

\subsection{ADMM}

ADMM can be used to solve the following constrained optimization problem%
\begin{align*}
    &\text{minimize} \quad\; f(j)\\
    &\text{subject to} \quad j \in \mathcal{C},
\end{align*}
where $j \in \mathbb{R}^n$ is the problem variable, $f$ and $\mathcal{C}$ are convex. We rewrite the problem in ADMM form as%
\begin{align*}
    &\text{minimize} \quad\; f(j)+g(k)\\
    &\text{subject to} \quad j=k,
\end{align*}
where $g$ is the indicator function of $\mathcal{C}$.
 
The augmented Lagrangian with the scaled dual variable $l \in \mathbb{R}^n$ for this problem is%
$$
L_{\rho}(j,k,l)= f(j)+g(k)+(\rho/2)\|j-k+l\|^2_2,
$$
which is obtained by combining the terms in the augmented Lagrangian, see~\cite[Section~3.1.1]{boyd2011distributed} for details. The ADMM iterations for this problem are%
\begin{align*}
&j^{m+1}\coloneqq{\text{argmin}}\lbrace f(j)+(\rho/2)\|j-k^m-l^m\|^2_2\rbrace,\\
&k^{m+1}\coloneqq\pi_{\mathcal{C}}(j^{m+1}+k^m),\\
&l^{m+1}\coloneqq l^m+j^{m+1}-k^{m+1},
\end{align*}
where $\rho \in \mathbb{R_+}$ is a penalty parameter, $j^{m}$ is the value of $j$ after $m'$th iteration and $\pi_{\mathcal{C}}$ denotes projection onto $\mathcal{C}.$

The $j$-update involves minimizing $f$ plus a convex quadratic function, i.e., evaluation of the proximal operator associated with $f$. 
The $k$-update is Euclidean projection onto $C$. 

\subsection{Distributed Linear-Gaussian Sensor Design}

We now give our algorithm that solves the linear-Gaussian sensor design problem in $\mathcal{O}(T)$ time. 
Our algorithm involves constructing an equivalent problem to the problem in~\eqref{eq:obj}--\eqref{eq:cons} and deriving ADMM updates that can be computed in $\mathcal{O}(T)$ time, which proves that we can solve the Gaussian SRD problem in time linear with the horizon length $T$.

\begin{theorem}
The Gaussian SRD problem can be solved in $\mathcal{O}(T)$ time with the horizon length $T$.
\end{theorem}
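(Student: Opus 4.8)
The plan is to recast the SDP \eqref{eq:obj}--\eqref{eq:cons} in the two‑block form $\min f(j)+g(k)$ subject to $j=k$ used above, chosen so that \emph{both} $f$ and $g$ are separable over the time index, and then to bound the cost of one ADMM iteration by $\mathcal{O}(T)$. The first step is to isolate the only inter‑temporal coupling, namely \eqref{eq:cons}, and to break it with redundant copies: for each $t$ introduce a copy $S_t$ of $P_{t|t}$ and a copy $R_{t+1}$ of $P_{t+1|t+1}$, replace \eqref{eq:cons} by $R_{t+1}\preceq A_tS_tA_t^{\top}+W_t$, and append the consensus equalities $S_t=P_{t|t}$ and $R_{t+1}=P_{t+1|t+1}$. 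Then let $j$ collect, for every $t$, the local variables $(P_{t|t},\Pi_t)$ and let $f$ be the objective $-\sum_t\mathrm{logdet}\,\Pi_t$ plus the indicator of the per‑step constraints \eqref{eq:cons1}, \eqref{eq:cons2}, \eqref{eq:cons3}; let $k$ collect the copies $(S_t,R_{t+1})$ and let $g$ be the indicator of the now decoupled constraints $R_{t+1}\preceq A_tS_tA_t^{\top}+W_t$. Since each $P_{t|t}$ occurs in at most two of the coupling constraints, the consensus map $j\mapsto k$ is block structured, $f=\sum_t f_t$ and $g=\sum_t g_t$, and each $f_t,g_t$ acts only on a constant number of $n\times n$ matrices, independent of $T$.

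Next I would write out the scaled ADMM iterations for this decomposition and cost each one. The $j$‑update $\mathrm{argmin}_j\{f(j)+(\rho/2)\|j-k^m-l^m\|_2^2\}$ splits, by separability of $f$, into $T$ independent subproblems; the $t$‑th subproblem minimizes $-\mathrm{logdet}\,\Pi_t$ plus a convex quadratic penalty subject to \eqref{eq:cons1}--\eqref{eq:cons3} and involves only $\mathcal{O}(1)$ matrix variables, hence is a fixed‑size SDP (the log‑det proximal map is available in closed form via an eigenvalue shift, and the LMI \eqref{eq:cons2} has bounded dimension), solvable in $\mathcal{O}(1)$ time; so the full $j$‑update is $\mathcal{O}(T)$. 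The $k$‑update is the Euclidean projection onto $\mathcal{C}$, which — because $g$ is now separable — reduces to $T$ independent projections onto sets of the form $\{(S_t,R_{t+1}):R_{t+1}\preceq A_tS_tA_t^{\top}+W_t\}$, each a PSD‑cone‑type constraint of fixed dimension whose projection is a small SDP (reducible, after an affine change of variables, to an eigendecomposition) and hence $\mathcal{O}(1)$; so the $k$‑update is $\mathcal{O}(T)$ as well. The dual update is a coordinatewise addition, trivially $\mathcal{O}(T)$. Therefore one ADMM iteration runs in $\mathcal{O}(T)$ time, and the per‑step subproblems can moreover be solved in parallel.

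To finish, I would combine this per‑iteration bound with the convergence of ADMM on a convex program: ADMM converges, and the number of iterations needed to reach a prescribed accuracy does not grow with $T$, because the decomposed subproblems are structurally identical in size across $t$, so the constants controlling the rate are uniform in $T$. Consequently the total running time is $\mathcal{O}(T)$, and the finer discussion of the convergence rate and of the improvements over standard ADMM is carried out in the remainder of Section~\ref{sec:admm}. The main obstacle is the construction in the first step: one must introduce \emph{just enough} copy variables so that \eqref{eq:cons}, which links consecutive $P_{t|t}$'s into a chain, is split into pieces each touching only a bounded number of variables while keeping the consensus map block diagonal — with too few copies the $k$‑update projection fails to decouple and secretly costs more than $\mathcal{O}(T)$. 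A secondary subtlety is making precise the claim that the iteration count is $T$‑independent, which rests on the uniform‑in‑$T$ structure of the subproblems rather than on a worst‑case interior‑point bound.
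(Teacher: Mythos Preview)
Your proposal is correct and follows essentially the same approach as the paper: introduce auxiliary copies to decouple the sole inter-temporal constraint \eqref{eq:cons}, then run ADMM so that both block updates separate over $t$ into fixed-size subproblems, giving $\mathcal{O}(T)$ work per iteration. The paper's splitting is only a cosmetic variant---it uses one copy $Q_{t|t}=P_{t|t}$ together with a PSD slack $S_t=K_t\succeq 0$ that turns \eqref{eq:cons} into an equality, and it places the log-det objective and the per-step LMIs in the $k$-block rather than the $j$-block---and it is just as informal as you are about the $T$-independence of the iteration count.
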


\begin{proof}
Our proof is based on constructing an equivalent optimization problem that can be solved in a distributed manner with ADMM such that each ADMM update runs in time linear with the horizon length $T$.

We rewrite the SDP~\eqref{eq:obj}--\eqref{eq:cons} by adding additional variables $K_t$, $Q_{t|t},$ and $S_t$, and the additional constraints in~\eqref{eq:admm_couple}--\eqref{eq:admm_cons6},%
\begin{align}
    \displaystyle & \text{minimize}\quad - \sum\nolimits_{t=1}^T \text{logdet}~\Pi_{t}\label{eq:objective}\\
    &\text{subject to} \;\;\;P_{T|T}=\Pi_{T}, \Pi_{t} \succ 0, \;\;\; t=1, 2, \ldots, T,\label{eq:admm_cons1}\\
    &\hspace{-0.2cm}\begin{bmatrix}
    P_{t|t}-\Pi_{t} &\hspace{-0.1cm}P_{t|t}A^{\top}_t\\
    A_t P_{t|t} & \hspace{-0.1cm} W_t + A_t P_{t|t}A^{\top}_t
    \end{bmatrix}\succeq 0,\; t=1, 2, \ldots, T-1,\\
    &\text{Tr}(\Theta_t P_{t|t}) \leq D_t,\;\;\; t=1, 2, \ldots, T-1,\label{eq:admm_cons3}\\
    &P_{t+1|t+1}+\hspace{-0.03cm}S_t\hspace{-0.03cm}=\hspace{-0.03cm}A_t Q_{t|t}A^{\top}_t \hspace{-0.03cm}+\hspace{-0.03cm}W_t,\; t=1, 2, \ldots, T-1,\label{eq:admm_couple}\\
    &K_t\succeq 0,\;\;Q_{t|t}=P_{t|t},\;S_t=K_t,\; t=1, 2, \ldots, T-1\label{eq:admm_cons6}.
\end{align}%

By construction, it is clear that the optimization problem in~\eqref{eq:obj}--\eqref{eq:cons} shares the same objective value and set of optimal solutions with the problem~\eqref{eq:objective}--\eqref{eq:admm_cons6}.

We rewrite the above problem in ADMM form with $j$ denoting the variables for $S_t$ and $Q_{t|t}$, $k$ denoting the variables for $P_{t|t}$, $\Pi_t$, $K_t$, and $f(j)$ is the indicator function for~\eqref{eq:admm_couple} and $g(k)$ is the sum of the objective in~\eqref{eq:objective} and the indicator functions for~\eqref{eq:admm_cons1}--\eqref{eq:admm_cons3} and~$K_t \succeq 0$. 

We now construct the ADMM iterations as follows.
The $j$-update is given by solving the following convex problem%
\begin{align*}
  & \text{minimize}\quad  \sum\nolimits_{t=1}^{T-1} (\|Q_{t|t}-P^m_t+U_t^m\|^2_{F}+ \\\nonumber &\|S_t-K^m_t+V_t^m\|^2_{F})\\
    &\text{subject to} \\
    & P^m_{t+1|t+1}+S_t= A_t Q_{t|t}A^{\top}_t +W_t, \; t=1, 2, \ldots, T-1,
\end{align*}%
with variables $Q_{t|t}$ and $S_t$. $U_t$ denotes the dual variable for the constraint $Q_{t|t}=P_{t|t}$ and $V_t$ denotes the dual variable for the constraint $S_t=K_t$, and $\|.\|_{F}$ denotes the Frobenius norm of a matrix.  
This problem is separable with each time step $t=1, 2, \ldots, T$, meaning that the optimal solution can be obtained by solving for each time step $t$ separately. Therefore, $j-$ update can be done in time linear in $T$. 

Let $\lbrace Q^{m+1}_{t|t}, K^{m+1}_t\rbrace_{t=1, 2, \ldots, T-1}$ be the optimal solution of this convex optimization problem. 
Then, the $k$-update is given by solving the following SDP%
\begin{align*}
  & \text{minimize}\quad   - (2/\rho)\sum\nolimits_{t=1}^T \text{logdet}~\Pi_{t}+\\
  &\sum\nolimits_{t=1}^{T-1} (\|Q^{m+1}_{t|t}-P_t+U_t^m\|^2_{F}+\|S^{m+1}_t-K_t+V_t^m\|^2_{F})\nonumber\\
    &\text{subject to}\quad P_{T|T}=\Pi_{T}, \Pi_{t} \succ 0, \;\;\; t=1, 2, \ldots, T-1,\\
    &\begin{bmatrix}
    P_{t|t}-\Pi_{t} & P_{t|t}A^{\top}_t\\
    A_t P_{t|t} & W_t + A_t P_{t|t}A^{\top}_t
    \end{bmatrix}\succeq 0, \;\;\; t=1, 2, \ldots, T-1,\\
    &\text{Tr}(\Theta_t P_{t|t}) \leq D_t, \;K_t\succeq 0, \;\; t=1, 2, \ldots, T-1,
\end{align*}%
with variables  $\Pi_t$, $P_{t|t}$ and $K_t$ for $t=1, 2, \ldots, T$. 
Note that this problem also is separable with each time step $t$.

Let $ \lbrace \Pi^{m+1}_t, P^{m+1}_t, K^{m+1}_t\rbrace_{t=1, 2, \ldots, T} $ be the optimal solution of this SDP.
Then, the $l$-update is given by%
\begin{align*}
  &  U^{m+1}_t=U^k_t+Q^{m+1}_{t|t}-P^{m+1}_t,\\
  & V^{m+1}_t=V^m_t+K^{m+1}_t-S^{m+1}_t,
\end{align*}%
for $t=1, 2, \ldots, T-1$. The dual update also scales linearly with $T$. This completes our proof.
\end{proof}%

Our algorithm solves the linear-Gaussian sensor design problem, which is the most computationally challenging part of solving the minimum-information linear-quadratic Gaussian problem in time linear with horizon length.
In the following, we discuss the convergence rate of the algorithm and how we can improve the convergence rate of ADMM.

\subsection{Improvements over the Standard ADMM}

ADMM can generate solutions with moderate accuracy after the first few tens of iterations and solve large-scale problems effectively~\cite{boyd2011distributed}. 
However, ADMM can be very slow to converge to a highly accurate solution.
In this section, we discuss some theoretical properties of the proposed algorithm and possible improvements to improve the convergence rate.

\subsubsection{Convergence Rate of the Algorithm}
ADMM can achieve linear convergence rate, i.e., requiring $\mathcal{O}(\log (1/\epsilon))$ iterations to achieve $\epsilon$ accuracy by choosing a small enough step-size in the dual update~\cite{hong2017linear}, or if $f$ is strongly convex~\cite{giselsson2016linear}.


\subsubsection{Using Acceleration Steps}
An extension to ADMM is to use an additional acceleration step, which is including additional update steps in the $k-$update and the dual update. 
It is empirically shown in~\cite{goldstein2014fast} that acceleration steps can significantly improve the convergence rate.  
Adding an acceleration step requires the objective $f$ to be strongly convex to ensure convergence. 
Therefore, additional strongly convex terms are required to be added to the objective in the $j-$ update, such as $\mu\left(\|Q_{t|t}\|^2_{F}+\|S_t\|^2_{F}\right)$, where $\mu$ is a positive parameter.
   
The acceleration step is carried out by modifying the ADMM iterations as%
\begin{align*}
&j^{m+1}\coloneqq{\text{argmin}}\lbrace f(j)+(\rho/2)\|j-\hat{k}^m+\hat{l}^m\|^2_2\rbrace,\\
&k^{m+1}\coloneqq\pi_{\mathcal{C}}(j^{m+1}+\hat{l}^m),\\
&l^{m}\coloneqq\hat{l}^{m}+\rho(j^{m+1}-k^{m+1}),\\
&\beta^{m+1}\coloneqq (1+\sqrt{1+4{\beta^m}^2})/2,\\
&\hat{k}^{m+1}\coloneqq k^m+\dfrac{\beta^m -1}{\beta^{m+1}}(k^m-k^{m-1}),\\
&\hat{l}^{m+1}\coloneqq l^m+\dfrac{\beta^m -1}{\beta^{m+1}}(l^m-l^{m-1}),
\end{align*}%
where $\beta^m$ is the acceleration parameter, and $\beta^0=0.$
Restarting the algorithm here refers to setting the acceleration parameter to $0$ during the iterations.

\subsubsection{Over-relaxation}

Over-relaxation is done by replacing $j^{m+1}$ with $\gamma^m j^{m+1}-(1-\gamma^m)j^m$ in $k-$ and $l-$ updates. $\gamma^m \in (0,2)$ is a relaxation parameter. When $\gamma^m > 1$, this technique is called over-relaxation, and when $\gamma^m < 1$, it is called under-relaxation. 
The intuition in over-relaxation is to take an additional step in the $k-$ and $l-$ updates. References~\cite{eckstein1992douglas,nishihara2015general} analyze the convergence of over-relaxation for different parameters. 
Experiments in~\cite{eckstein1998operator,eckstein1994parallel} suggest that over-relaxation with $\gamma^m \in [1.5,1.8]$ can improve convergence.

\section{Applications to Path-Planning Problem}\label{sec:path}

In this section, we leverage our SDP-based formulation to synthesize an optimal control and sensing policy to the path-smoothing problem.
The path-smoothing problem we consider has a similar structure to~\eqref{eq:obj}--\eqref{eq:cons} with additional variables and constraints.
The ADMM-based algorithms provided in the previous section are largely applicable.
We provide an algorithm for the path-smoothing problem that locally optimizes a weighted sum of the control and perception cost from any given trajectory by iteratively convexifying the problem around the trajectory.
Our algorithm computes a locally optimal solution, and if the given trajectory is feasible, then all successive trajectories are guaranteed to be feasible.

Suppose~\eqref{gmprocess} represents the dynamics of a mobile robot under stochastic disturbances, where $x_t \in \mathbb{R}^n$ for $t=1, 2, \ldots, T$ is the position of the robot, and $u_t \in \mathbb{R}^m$ for $t=1, 2, \ldots, T-1$ is the control input.
Let $X_{\mathrm{obs}} \in \mathbb{R}^m$ be a closed set of points that represents the obstacles to avoid. 
We can formulate the path-planning problem by modifying the objective in~\eqref{eq:obj} by%
\begin{align}
      \displaystyle & \text{minimize}\quad \sum\nolimits_{t=1}^T (-\text{logdet}~\Pi_{t}+(1/\alpha)\| u_t\|^2_2)\label{eq:obj_path}
\end{align}%
and adding the following constraints to~\eqref{eq:cons1}--\eqref{eq:cons},%
\begin{align}
 &u_t \in \mathcal{U}_t, \quad t=1, 2, \ldots, T-1\label{eq:control_cons_path},\\
  &x_t \in \mathcal{X}_t, \quad t=1, 2, \ldots,T,\label{eq:state_cons_path}\\
 &x_{t+1}=A_t x_t+B_t u_t, \quad t=1, 2, \ldots, T-1,\label{eq:dynamics_path}\\
 &(x_t-x_{\mathrm{obs}})^{\top}P^{-1}_{t|t}(x_t-x_{\mathrm{obs}})\geq \chi^2,\label{eq:cons_avoid_path}\\ \nonumber\quad & t=1, 2, \ldots, T, \quad \forall x_{\mathrm{obs}} \in X_{\mathrm{obs}},
\end{align}%
where $\alpha \in \mathbb{R}_{+}$ is a parameter that gives the trade-off between the control and the perception cost, $\chi^2 \in \mathbb{R}_{+}$ is a confidence level parameter,
$A_t \in \mathbb{R}^{n \times n}$ and $B_t \in \mathbb{R}^{n \times m}$ are given matrices, and ${x}_{\mathrm{obs}} \in \mathbb{R}^n$ are the obstacles. 
Here, we are adopting the notion of path-planning in the so-called \emph{uncertain configuration space}~\cite{lambert2003safe}, i.e., designing the sequence of mean $x_t$ and covariance $P_{t|t}$ jointly.

The objective in~\eqref{eq:obj_path} minimizes a trade-off between the perception and control cost. 
The constraints in~\eqref{eq:control_cons_path}--\eqref{eq:state_cons_path} represent the constraints in the inputs and final state. 
The constraints in~\eqref{eq:dynamics_path} give the dynamics of the linear time-varying process, and \eqref{eq:cons_avoid_path} ensures that the uncertainty ellipsoid given by $P_{t|t}$ around the trajectory $x_t$ does not intersect with the obstacles.
The problem in~\eqref{eq:cons1}--\eqref{eq:cons} and \eqref{eq:obj_path}--\eqref{eq:cons_avoid_path} is a convex optimization problem without the constraints in~\eqref{eq:cons_avoid_path}, which is a nonconvex constraint in general. 
  
\begin{remark}
We can include any convex cost, such as in the position or control variables in the objective~\eqref{eq:obj_path}. In this paper, we consider control and perception cost for simplicity.
\end{remark}

We use the penalty convex-concave procedure (CCP)~\cite{lipp2016variations}, which iteratively over-approximates a non-convex optimization problem via linearization. 
The resulting convex problem can then be solved efficiently, and we iterate the process until we compute a locally optimal solution.
Specifically, we compute affine upper bounds for the convex functions in~\eqref{eq:control_cons_path}.
CCP improves the solution by convexifying the problem around the previous solution iteratively.

To perform CCP, we start with any initial trajectory $\lbrace \hat{x}_{t}, \hat{P}_{t|t}\rbrace_{t=1, 2, \ldots, T}$ in the uncertain configuration space and convexify the constraints in~\eqref{eq:cons_avoid_path} for $t=1, 2, \ldots, T, \quad \forall x_{\mathrm{obs}} \in X_{\mathrm{obs}}$ as%
\begin{align}
&x_{t,\mathrm{obs}}=(x_t-x_{\mathrm{obs}}),\nonumber\\
 &   h(x_t,P_{t|t},x_{\mathrm{obs}})=x_{t,\mathrm{obs}}^{\top}P^{-1}_{t|t}x_{t,\mathrm{obs}},\nonumber\\
  &  \bar{h}(x_t,P_{t|t},x_{\mathrm{obs}})=h(\hat{x}_t,\hat{P}_{t|t},x_{\mathrm{obs}})+\label{eq:convexification3}\\
  &\nabla h(\hat{x}_t,\hat{P}_{t|t},x_{\mathrm{obs}})(\hat{x}_t-x_t,\hat{P}_{t|t}-P_{t|t},x_{\mathrm{obs}}),\nonumber\\
  &\nabla h(\hat{x}_t,\hat{P}_{t|t},x_{\mathrm{obs}})=-\hat{P}_{t|t}^{-1}x_{t,\mathrm{obs}}x_{t,\mathrm{obs}}^{\top}\hat{P}_{t|t}^{-1},\label{eq:convexification4}\\
  &\bar{h}(x_t,P_{t|t},x_{\mathrm{obs}})\geq \chi^2,\label{eq:cons_avoid_path_ccp}
\end{align}%
where~\eqref{eq:convexification3} computes an affine approximation of the convex function $h$ around $(\hat{x}_{t}, \hat{P}_{t|t})$ and~\eqref{eq:convexification4} is the gradient of $h$ at $(\hat{x}_{t}, \hat{P}_{t|t})$. \eqref{eq:cons_avoid_path_ccp} is the resulting convexified constraint.  

The function $\bar{h}$ is a \emph{convex over-approximation} of the original function $h$ as we compute an affine lower bound of $h$.
As a direct consequence, any feasible assignment for the resulting over-approximated and convex problem is also feasible for the original nonconvex problem.
However, the resulting convex problem might be infeasible, even though the original problem is not. 
To find a feasible assignment, we assign a non-negative penalty variable $m_{t,\mathrm{obs}}$ for each of the convexified constraints by modifying the constraint~\eqref{eq:cons_avoid_path_ccp} as~
\begin{align}
  m_{t,\mathrm{obs}}+\bar{h}(x_t,P_{t|t},x_{\mathrm{obs}})\geq \chi^2\label{eq:cons_avoid_path_ccp2}
\end{align}
to ensure that the convexified problem is always feasible~\cite[Section 3.1]{lipp2016variations}.
To find a solution that minimizes violations to the convexified constraints, we minimize the sum of the penalty variables. 
We then solve the convex problem with the objective%
\begin{align}
      \displaystyle & \hspace{-0.15cm}\text{minimize}\quad \sum\nolimits_{t=1}^T (-\text{logdet}~\Pi_{t}+(1/\alpha)\| u_t\|^2_2)+\nonumber\\
      &\hspace{-0.15cm}\displaystyle \tau\sum\nolimits_{t=1}^T\sum\nolimits_{x_\mathrm{obs} \in X_\mathrm{obs}}m_{t,\mathrm{obs}} \nonumber
      \end{align}%
      and the constraints in~\eqref{eq:cons1}--\eqref{eq:cons},  \eqref{eq:control_cons_path}--\eqref{eq:dynamics_path} and~\eqref{eq:cons_avoid_path_ccp2}
      where $\tau$ is a positive penalty parameter that minimizes a trade-off between the objective and the violations of the constraints.
If all penalty variables are assigned to zero, then the solution of the convex problem is feasible for the original non-convex path-planning problem, as we over-approximate the convex functions by affine functions. 
If any of the penalty variables $m_{t,\mathrm{obs}}$ and are assigned to a positive value, we update the penalty parameter $\tau$ by $\mu \cdot \tau$ for a $\mu > 0$, similar to~\cite[Algorithm 3.1]{lipp2016variations}.

After getting a new solution, we convexify the non-convex constraints by linearizing the convex functions around the new solution and solve the resulting convex SDP. 
We repeat the procedure until we find a feasible and locally optimal solution. 
If the procedure converges to an infeasible solution, we restart the procedure with another initial trajectory.
Note that the procedure converges to a locally optimal solution for a fixed $\tau$, i.e., after
$\tau=\tau_\mathrm{max}$, but it may converge to an infeasible point of the original problem~\cite[Section 1.3]{lipp2016variations}.

\begin{remark}
If the initial trajectory is feasible, we can set the penalty variables $m_\mathrm{t,obs}$ to zero, and all subsequent trajectories are guaranteed to be feasible after each iteration of the CCP~\cite{lipp2016variations}.
\end{remark}




\section{Numerical Examples}

\begin{figure}[t]
\input{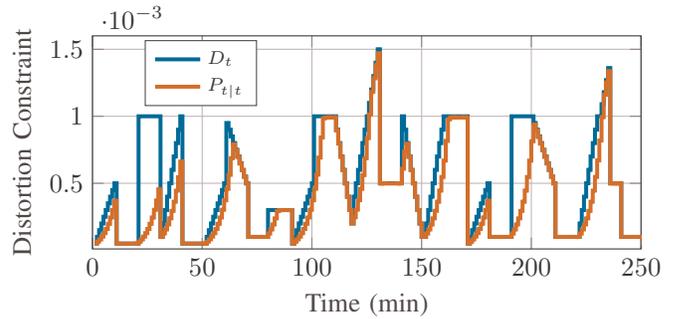}
\caption{Plot of the time-varying distortion constraint $D_{t}$ and the traces of the optimal covariance matrices  $P_{t|t}$ for the first 250 minutes of the mission.}
\label{fig:dist}
\end{figure}

 \begin{figure}[t]
 \input{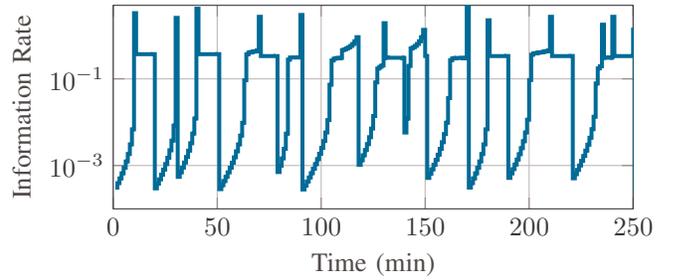}
 \caption{Plot of the optimal information rate $I(\bx(t):\by_t|\by(t-1))$ for the first 250 minutes of the mission.}
 \label{fig:info}
 \end{figure}

We evaluate our distributed sensor and control design procedure in two numerical examples.
The experiments are performed on a computer with an Intel Core i9-9900u 2.50 GHz processor and 64 GB of RAM with Mosek~\cite{aps2017mosek} as the SDP solver. 
The first example is on computing the minimal information to estimate the attitude of a satellite subject to accuracy constraints.
The second example is on a path-smoothing problem in a two-dimensional state space containing multiple obstacles in order to illustrate the effects of varying the trade-offs between the control and perception cost.


\subsection{Satellite Attitude Determination}

In this example, we consider the spin-stabilized satellite example from~\cite{tanaka2017semidefinite}. The equation of motion of the angular velocity vector of a spin-stabilized satellite linearized around the nominal angular velocity vector $(\omega_0, 0, 0)$ is
\[
\left[ \hspace{-0.1cm} \begin{array}{c} d\boldsymbol\omega_1 \\ d\boldsymbol\omega_2 \\ d\boldsymbol\omega_3 \end{array} \hspace{-0.1cm}\right]\hspace{-0.1cm}=\hspace{-0.1cm}
\left[ \hspace{-0.1cm}\begin{array}{ccc} 1 &0&0 \\\hspace{-0.1cm}
0 & 1 & \frac{I_3-I_1}{I_2}\omega_0 \\\hspace{-0.1cm}
0 & \frac{I_1-I_2}{I_3}\omega_0 & 1 \hspace{-0.1cm}\end{array}\right]\hspace{-0.1cm}
\left[\hspace{-0.1cm} \begin{array}{c} \boldsymbol\omega_1 \\ \boldsymbol\omega_2 \\ \boldsymbol\omega_3 \end{array}\hspace{-0.1cm}\right]dt+d{\bf b}
\]
where ${\bf b}$ is a disturbance, $I_1, I_2$, and $I_3$ are the moment of inertias in three dimensions. We convert the continuous-time dynamics into a discrete-time model in the experiments. 
The objective is to figure out the sensing pattern that uses minimal information during the mission. 
We consider a horizon length $T=1500$ with the distortion rate constraint given in Figure~\ref{fig:dist}.

\begin{figure}[t]
\input{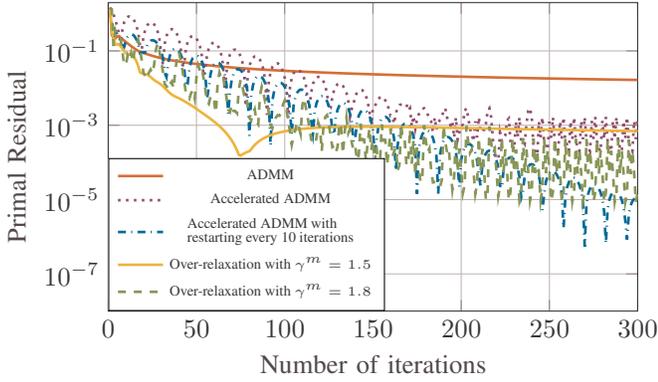}
\caption{Primal residuals for the spin-stabilized satellite example for standard ADMM, accelerated ADMM, and over-relaxed ADMM with $\rho=1$. Accelerated ADMM with restart achieves the best performance.}
\label{fig:prires}
\end{figure}

\begin{figure}[t]
\input{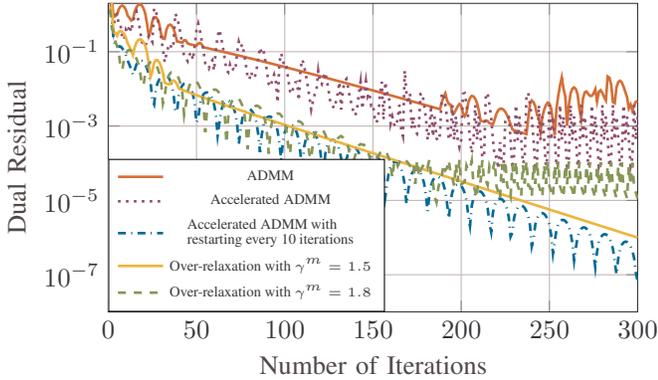}
\caption{Dual residuals for the spin-stabilized satellite example for standard ADMM, accelerated ADMM, and over-relaxed ADMM with $\rho=1$. Accelerated ADMM with restart achieves the best performance.}
\label{fig:dualres}
\end{figure}

We plot the optimal covariance scheduling $P_{t|t}$ and the optimal information rate in Figures~\ref{fig:dist} and~\ref{fig:info} for the first $250$ time-steps.
We put a log-scale on the $y-$axis in Figure~\ref{fig:info} to emphasize that the optimal covariance scheduling requires some information rate in all time steps, which is not the case in the numerical examples in~\cite{tanaka2017semidefinite}. 
The reason that there is a requirement of some information in all time steps may stem from having a less accurate solution from an ADMM-based algorithm rather than from an interior-point method.
We can see that the information rate varies significantly during the mission, and the optimal information rate is minimal if the distortion rate constraint is not restrictive.%
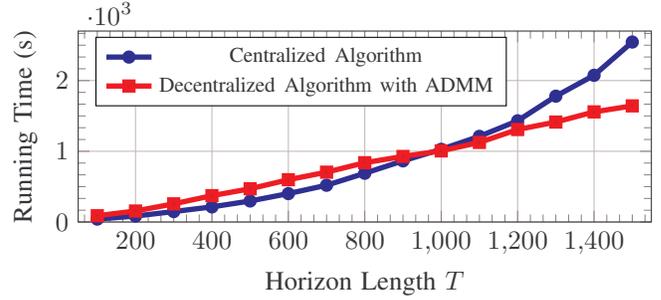
\begin{figure}[t]
\definecolor{color1}{rgb}{0.00000,0.44700,0.74100}%
\definecolor{color2}{rgb}{0.85000,0.32500,0.09800}%
\begin{tikzpicture}
\begin{axis}[%
width=3.00in,
height=1.0in, 
at={(2.167in,0.898in)},
scale only axis,
xmin=50,
xmax=1550,
    minor tick num=5,
xmajorgrids=true,
ymajorgrids=true,
scaled y ticks=base 10:-3,
    major grid style={line width=.2pt,draw=gray!50},
legend style={at={(0.03,0.97)},anchor=north west,font=\fontsize{8}{8}\selectfont},
xlabel style={font=\color{white!0!black}},
xlabel={Horizon Length $T$},
ymin=0,
ymax=2700,
ylabel style={font=\color{white!0!black}},
ylabel={Running Time (s)},
axis background/.style={fill=white},
title style={font=\bfseries},
    point meta=explicit symbolic,
title={},
mark size=1.5pt,
mark=circle,
]
\addplot+ [line width=2.0pt] table[x=x,y=y] {datacent.dat}; 

\addlegendentry{Centralized Algorithm}
\addplot+ [line width=2.0pt] table[x=x,y=y]{dataadmm.dat}; 
\addlegendentry{Decentralized Algorithm with ADMM}

\end{axis}
\end{tikzpicture}
\caption{The running times of the distributed and centralized algorithms with a different number of horizon lengths after 50 iterations with the distributed algorithm.
The distributed algorithm scales linearly with the number of horizon length. 
The centralized algorithm is slower than the distributed algorithm with an increasing number of horizon lengths and does not scale linearly.}
\label{fig:admmplot}
\end{figure}%

We now show the convergence rate of the ADMM-based algorithms by plotting the norm of the primal residual, which is given by $r^1_t=P_t-Q_t$ and $r^2_t=S_t-K_t$ and the dual residual, which is $d^1_t=\rho(P_{t}-P_{t-1})$ and $d^2_t=\rho(K_{t}-K_{t-1})$ for different methods~\cite[Section~3.3]{boyd2011distributed}. 
The primal residual denotes the infeasibility of the methods in each iteration, and the dual residual denotes the optimality of the methods. 
If the primal residual is small, then the variables approach to a feasible solution. 
If the dual residual is small, then the variables approach to an optimal solution.%
\begin{figure}[t]
\input{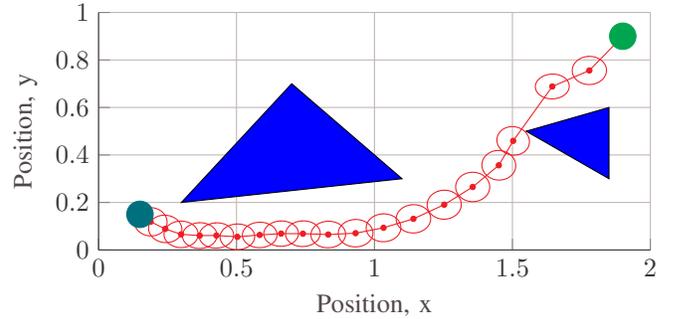}
\caption{The plot of the initial trajectory with covariance matrices. The small red dots denote the expected state at different time steps, and the red ellipsoids are $\chi^2$ covariance ellipses representing $90\%$ certainty regions. The blue regions and the boundaries are the obstacles to avoid.}
\label{fig:init}
\end{figure}%

We list the convergence rates of the over-relaxation and accelerated variants of the ADMM in Figures~\ref{fig:prires}--\ref{fig:dualres}. We plot the primal residuals, in Figure~\ref{fig:prires} we plot the dual residuals in Figure~\ref{fig:dualres}. Residuals for the spin-stabilized satellite example for regular ADMM and over-relaxed ADMM with $\rho=1$.

We note that the over-relaxation and the accelerated variants of the ADMM achieve higher accuracy than the standard ADMM. 
However, the tail convergence of the methods can be very slow. 
Accelerated ADMM with restarting the acceleration parameter in every ten iterations achieves feasibility with higher accuracy compared to other methods, and the tail convergence is faster compared to the other methods. 

For dual residuals, both of the over-relaxation schemes achieve better accuracy. However, the iterates change very slowly before a high accuracy in the primal residual is achieved, which is not a desired property. On the other hand, accelerated ADMM with restarting the acceleration parameter in every ten iterations achieves very high accuracy in the dual residual, and the resulting solution has very high accuracy.

We show the running time of the distributed and the centralized algorithm, which is solving the SDP in~\eqref{eq:obj}--\eqref{eq:cons} directly, for the different number of horizon lengths in Figure~\ref{fig:admmplot}. The results in Figure~\ref{fig:admmplot} show that the running time of the distributed algorithm scales linearly with the horizon length and is faster than the centralized algorithm with a higher horizon length.%
\begin{figure}[t]
\input{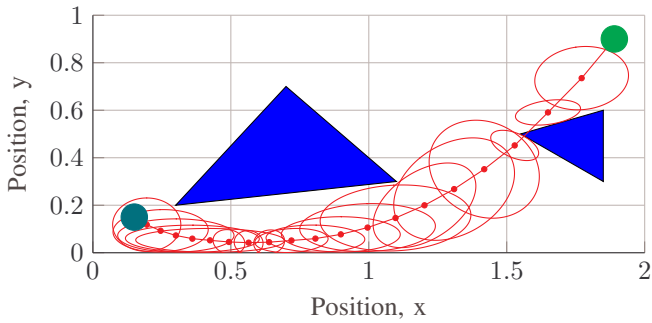}
\caption{Resulting trajectory after 50 CCP iterations with $\alpha=0.1$. The expected values of the locations at each time step are feasible. However, the trajectory may collide with the obstacles as the direct path between the locations in different time-steps crosses one of the obstacles.} 
\label{fig:final001}
\end{figure}%
\begin{figure}[t]%
\input{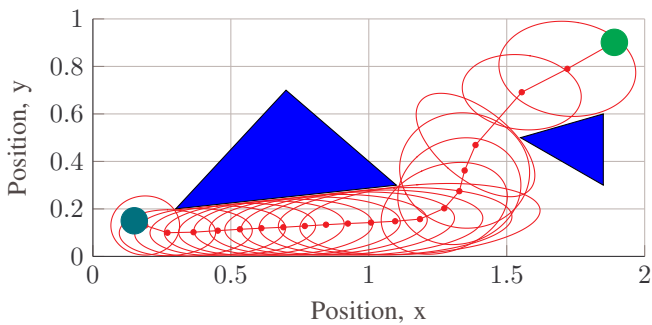}
\caption{Resulting trajectory after 50 CCP iterations with $\alpha=10$. The procedure maximizes the distance between the obstacles and the trajectory by minimizing the perception cost.}
\label{fig:final1}
\end{figure}%
\subsection{Path Planning with Multiple Obstacles}%
We consider a path-planning problem with multiple obstacles in a two-dimensional state space.
We consider an example with $\alpha=10, 1, 0.1, 0,001$ to illustrate the effects of varying the trade-offs between the perception and control cost.
%
\begin{figure}[t]

\begin{tikzpicture}[thick,scale=1.0, every node/.style={scale=1.0}]
\definecolor{color0}{rgb}{0.1215,0.4666,0.7058}
\definecolor{color1}{rgb}{1,0.4980,0.0549}
\definecolor{color2}{rgb}{0.1725,0.6274,0.1725}
\definecolor{color3}{rgb}{0.8392,0.1529,0.1568}
\begin{axis}[
axis y line=right,
axis line style={-},
tick align=outside,
x grid style={white!69.01960784313725!black},
xmajorgrids,
height=2.016in,
width=3.15in,
xmin=-2.45, xmax=51.45,
xtick pos=left,
label style={font=\normalsize},
xtick style={color=black},
y grid style={white!69.01960784313725!black},
ylabel={Control Cost},
ytick style={color=black,font=\normalsize},
ymajorgrids,
ymin=6.43657, ymax=14.18563,
ytick pos=right,
ytick style={color=black}
]
\addplot [ultra thick, color0, dashed]
table {%
0 10.0958
1 7.7105
2 7.2451
3 7.0599
4 7.0909
5 7.0737
6 7.0511
7 7.0245
8 6.9957
9 6.9649
10 6.9334
11 6.9014
12 6.8699
13 6.8386
14 6.8068
15 6.7943
16 6.7899
17 6.7897
18 6.7895
19 6.7895
20 6.7904
21 6.7914
22 6.7928
23 6.7918
24 6.7907
25 6.7911
26 6.7901
27 6.792
28 6.7925
29 6.7931
30 6.7901
31 6.7918
32 6.7931
33 6.789
34 6.7903
35 6.7917
36 6.7907
37 6.7923
38 6.7897
39 6.7899
40 6.7901
41 6.7911
42 6.7902
43 6.7902
44 6.79
45 6.7926
46 6.7888
47 6.7907
48 6.7926
49 6.7906
};
\addplot [ultra thick, color1, dashed]
table {%
0 10.0958
1 7.1825
2 6.913
3 6.9179
4 7.9764
5 8.3243
6 8.4802
7 8.5988
8 8.673
9 8.7152
10 8.7464
11 8.7559
12 8.7788
13 8.7792
14 8.7886
15 8.7879
16 8.783
17 8.7774
18 8.7734
19 8.7651
20 8.7672
21 8.7693
22 8.7694
23 8.7693
24 8.7691
25 8.769
26 8.769
27 8.769
28 8.769
29 8.769
30 8.769
31 8.769
32 8.769
33 8.769
34 8.769
35 8.769
36 8.769
37 8.769
38 8.769
39 8.769
40 8.769
41 8.769
42 8.769
43 8.769
44 8.769
45 8.769
46 8.769
47 8.769
48 8.769
49 8.769
};
\addplot [ultra thick, color3, dashed]
table {%
0 10.0958
1 8.5471
2 8.8114
3 9.2896
4 12.1409
5 13.8274
6 13.8334
7 13.8189
8 13.8185
9 13.8184
10 13.8184
11 13.8184
12 13.8184
13 13.8184
14 13.8184
15 13.8183
16 13.8183
17 13.8183
18 13.8183
19 13.8183
20 13.8183
21 13.8183
22 13.8183
23 13.8183
24 13.8183
25 13.8183
26 13.8183
27 13.8183
28 13.8183
29 13.8183
30 13.8183
31 13.8183
32 13.8183
33 13.8183
34 13.8183
35 13.8183
36 13.8183
37 13.8183
38 13.8183
39 13.8183
40 13.8183
41 13.8183
42 13.8183
43 13.8183
44 13.8183
45 13.8183
46 13.8183
47 13.8183
48 13.8183
49 13.8183
};
\addplot [ultra thick, color2, dashed]
table {%
0 10.0958
1 9.8094
2 11.5749
3 11.8282
4 11.3708
5 11.3807
6 11.5204
7 11.5148
8 11.5235
9 11.5317
10 11.5385
11 11.5442
12 11.5488
13 11.5526
14 11.5557
15 11.5582
16 11.5602
17 11.5619
18 11.5632
19 11.5643
20 11.5652
21 11.5659
22 11.5665
23 11.567
24 11.5674
25 11.5678
26 11.5681
27 11.5684
28 11.5686
29 11.5688
30 11.5689
31 11.5691
32 11.5691
33 11.5691
34 11.5691
35 11.5691
36 11.5691
37 11.5691
38 11.5691
39 11.5691
40 11.5691
41 11.5691
42 11.5691
43 11.5691
44 11.5691
45 11.5691
46 11.5691
47 11.5691
48 11.5691
49 11.5691
};
\end{axis}

\begin{axis}[
legend cell align={left},
height=2.016in,
width=3.15in,
legend columns=2,
legend style={at={(0.98,0.929)},fill opacity=1, draw opacity=1, text opacity=1, draw=white!0.0!black,font=\fontsize{8.0}{8.0}\selectfont},
tick align=outside,
tick pos=left,
label style={font=\normalsize},
x grid style={white!69.01960784313725!black},
xlabel={Number of CCP Iterations},
xmin=-2.45, xmax=51.45,
xtick style={color=black,font=\normalsize},
y grid style={white!69.01960784313725!black},
ylabel={Information Cost},
ymin=22.43301, ymax=69.70299,
ytick={20,30,40,50,60,80},
ytick style={color=black,font=\normalsize}
]
\addplot [ultra thick, color0]
table {%
0 67.5089
1 60.2185
2 52.9994
3 48.7614
4 46.3896
5 45.6051
6 45.324
7 45.1226
8 44.957
9 44.8781
10 44.9154
11 44.9513
12 44.9686
13 44.9947
14 45.0278
15 45.0283
16 45.2237
17 45.2307
18 45.2123
19 45.2801
20 45.2232
21 45.255
22 45.2309
23 45.2319
24 45.2287
25 45.2449
26 45.2251
27 45.2304
28 45.2314
29 45.1949
30 45.2311
31 45.2312
32 45.1842
33 45.2259
34 45.2892
35 45.2119
36 45.286
37 45.2008
38 45.2178
39 45.2858
40 45.2323
41 45.2366
42 45.2349
43 45.2392
44 45.3258
45 45.2021
46 45.2519
47 45.2682
48 45.1994
49 45.2294
};
\addlegendentry{$\alpha=0.01$}
\addplot [ultra thick, color1]
table {%
0 67.5089
1 54.5153
2 43.0517
3 35.9549
4 31.8805
5 30.2494
6 29.6284
7 29.428
8 29.3514
9 29.31
10 29.2832
11 29.2691
12 29.2545
13 29.2465
14 29.2482
15 29.2496
16 29.2514
17 29.2568
18 29.2619
19 29.2964
20 29.2815
21 29.2687
22 29.2663
23 29.2649
24 29.2649
25 29.2651
26 29.265
27 29.2649
28 29.2649
29 29.2648
30 29.2648
31 29.2648
32 29.2648
33 29.2647
34 29.2647
35 29.2647
36 29.2647
37 29.2647
38 29.2647
39 29.2647
40 29.2647
41 29.2647
42 29.2647
43 29.2647
44 29.2647
45 29.2647
46 29.2647
47 29.2647
48 29.2647
49 29.2647
};
\addlegendentry{$\alpha=0.1$}
\addplot [ultra thick, color2]
table {%
0 67.5089
1 51.9793
2 39.9627
3 32.332
4 28.4159
5 26.8942
6 26.6067
7 26.5741
8 26.5627
9 26.5553
10 26.5474
11 26.5424
12 26.5361
13 26.5271
14 26.5133
15 26.492
16 26.4722
17 26.4621
18 26.4604
19 26.46
20 26.4598
21 26.4598
22 26.4599
23 26.4599
24 26.46
25 26.46
26 26.4601
27 26.4601
28 26.4601
29 26.4601
30 26.4601
31 26.4602
32 26.4602
33 26.4602
34 26.4602
35 26.4602
36 26.4602
37 26.4602
38 26.4602
39 26.4602
40 26.4602
41 26.4602
42 26.4602
43 26.4602
44 26.4602
45 26.4602
46 26.4602
47 26.4602
48 26.4602
49 26.4602
};
\addlegendentry{$\alpha=1$}
\addplot [ultra thick, color3]
table {%
0 67.5089
1 51.2356
2 39.1105
3 31.4578
4 27.7488
5 26.354
6 25.7649
7 25.4281
8 25.1733
9 24.923
10 24.7586
11 24.6528
12 24.5961
13 24.5511
14 24.4756
15 24.3375
16 24.218
17 24.0994
18 23.9516
19 23.9298
20 23.928
21 23.9168
22 23.8182
23 23.7188
24 23.6438
25 23.6271
26 23.6273
27 23.6283
28 23.6298
29 23.6312
30 23.6325
31 23.6336
32 23.6347
33 23.6356
34 23.6364
35 23.6371
36 23.6377
37 23.6383
38 23.6388
39 23.6393
40 23.6396
41 23.64
42 23.6403
43 23.6405
44 23.6408
45 23.641
46 23.6412
47 23.6413
48 23.6414
49 23.6416
};
\addlegendentry{$\alpha=10$}
\end{axis}

\end{tikzpicture}
\vspace{-0.3cm}
 \caption{Perception and control costs versus the number of iterations with different values of $\alpha$. The iterates converge to a locally optimal solution with different parameters, and the resulting perception cost is lower with a higher value of $\alpha.$}
 \label{fig:cost}
\end{figure}
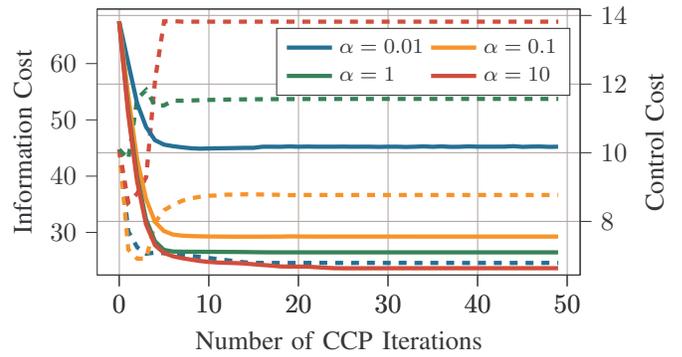%
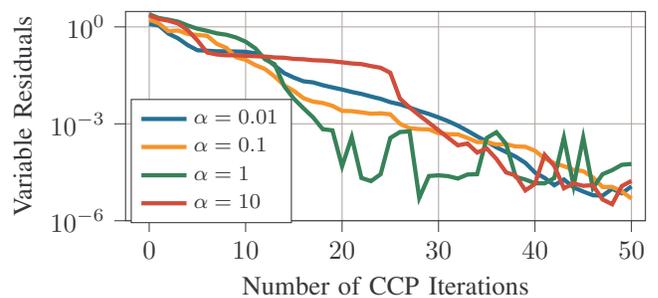
\begin{figure}[t]

\begin{tikzpicture}
\definecolor{color0}{rgb}{0.1215,0.4666,0.7058}
\definecolor{color1}{rgb}{1,0.4980,0.0549}
\definecolor{color2}{rgb}{0.1725,0.6274,0.1725}
\definecolor{color3}{rgb}{0.8392,0.1529,0.1568}
\begin{axis}[%
width=2.721in,
height=1.096in,
yminorticks=true,
ymajorticks=true,
xmajorgrids=true,
ymajorgrids=true,
yminorticks=true,
at={(2.6in,1.15in)},
scale only axis,
legend cell align={left},
legend style={at={(0.32,0.582)},fill opacity=1, draw opacity=1, text opacity=1, draw=white!0.0!black,scale=0.40,font=\fontsize{8.0}{8.0}\selectfont},
tick align=outside,
tick pos=left,
x grid style={white!69.01960784313725!black},
xlabel={Number of CCP Iterations},
xmin=-2.45, xmax=51.45,
xtick style={color=black,font=\normalsize},
y grid style={white!69.01960784313725!black},
ylabel={Variable Residuals},
ymin=1e-6, ymax=3,
ymode=log,
ytick style={color=black,font=\normalsize}
]
\addplot [ultra thick, color0]
table {%
0 1.2307216651178
1 1.10711536023209
2 0.619089681357247
3 0.442042563267115
4 0.272180438410394
5 0.189550517651964
6 0.183376312078445
7 0.175917763519581
8 0.175706734545704
9 0.173166789353357
10 0.169407710322336
11 0.152695392402206
12 0.111773775572462
13 0.063829551596295
14 0.0370973003119783
15 0.0269393761779017
16 0.0207028693268269
17 0.0191464327981978
18 0.0156735479110973
19 0.0133906567298721
20 0.0115182667342055
21 0.00968696790722545
22 0.00843770256196405
23 0.00700176784749854
24 0.00592287002687849
25 0.00468502683947604
26 0.00400608407882659
27 0.00318777967869138
28 0.00248732339800586
29 0.00201923739341464
30 0.00157892664809706
31 0.00120546017943892
32 0.00087762708298466
33 0.00060724850310307
34 0.000408857536004599
35 0.000282270131699741
36 0.000187316795628658
37 0.000134333882591322
38 9.62818446922562e-05
39 6.18653058366634e-05
40 3.11074235467803e-05
41 2.1126716649566e-05
42 1.25688817827023e-05
43 1.89432798289324e-05
44 1.07809382615791e-05
45 8.00715681919682e-06
46 6.16683161393427e-06
47 5.9383919360145e-06
48 9.74243449699378e-06
49 7.08073866569718e-06
50 1.1533286870781e-05
};
\addlegendentry{$\alpha=0.01$}
\addplot [ultra thick, color1]
table {%
0 1.67706840271663
1 1.29475772254854
2 0.73731337083166
3 0.774102906218317
4 0.602832983189601
5 0.562703901010092
6 0.531784511832256
7 0.300560748333764
8 0.224812165783118
9 0.115463793573171
10 0.0957435267090528
11 0.0658363360676252
12 0.0498123571861301
13 0.029312925551922
14 0.0155163536379584
15 0.0104224166461657
16 0.00621609792996509
17 0.00486368150934295
18 0.0044036714907785
19 0.00360821112897813
20 0.00251056405592333
21 0.00246842689535469
22 0.00226569711017494
23 0.00205031991927102
24 0.00211320673109007
25 0.00196555821302357
26 0.00100347788285395
27 0.000730283877458868
28 0.000685089776226013
29 0.000674503764648024
30 0.000511604890585563
31 0.00047717774373829
32 0.000471509999853735
33 0.000365276117465836
34 0.00027283279579954
35 0.000273823721387278
36 0.000227787867316276
37 0.000233384683693736
38 0.000199847593098164
39 0.000194560708574407
40 0.000146625964168022
41 8.46060399997367e-05
42 6.06379948802998e-05
43 4.87862139139958e-05
44 3.77144060754167e-05
45 3.42025492743754e-05
46 2.2112826302385e-05
47 1.10628347310465e-05
48 1.11333164042106e-05
49 8.0151342159158e-06
50 4.86859870082768e-06
};
\addlegendentry{$\alpha=0.1$}
\addplot [ultra thick, color2]
table {%
0 2.47102924606693
1 1.88279891740212
2 1.67279824480276
3 1.45060027307265
4 1.10588410752213
5 0.877597685887689
6 0.757920405303912
7 0.631932669948859
8 0.571995397889186
9 0.471300280717288
10 0.348374543321834
11 0.21793438430068
12 0.0996897993408157
13 0.0701173908735866
14 0.0145181886347833
15 0.00596839780939061
16 0.00302883722806321
17 0.00157774146582956
18 0.000681083229205822
19 0.000627114528518482
20 4.81788702362935e-05
21 0.000389475323974825
22 2.11473841743986e-05
23 1.68458982340067e-05
24 2.75550893604339e-05
25 0.000366197785194207
26 0.000557307569581609
27 0.000584207084825693
28 5.07090601475282e-06
29 2.41922998289384e-05
30 2.54624664315367e-05
31 2.39897165414053e-05
32 2.02235592397942e-05
33 2.69492747413916e-05
34 2.75583284118936e-05
35 0.000365404386393925
36 0.000548452916795528
37 0.000239020172850927
38 2.28022985031154e-05
39 1.87917985559346e-05
40 1.44864960693189e-05
41 1.43414519714486e-05
42 2.09374687387948e-05
43 0.000375353279328564
44 1.20327034673739e-05
45 0.000373973863496843
46 1.02033960069318e-05
47 2.74963939119643e-05
48 3.67538784540166e-05
49 5.38656829254334e-05
50 5.71434021202681e-05
};
\addlegendentry{$\alpha=1$}
\addplot [ultra thick, color3]
table {%
0 2.2507271152214
1 1.75776902255246
2 1.55569799237748
3 1.27197308987788
4 0.786228633740892
5 0.382156909790082
6 0.161460456477553
7 0.140167188083682
8 0.132065589585905
9 0.132373764579543
10 0.127306121054055
11 0.123710206159672
12 0.120833741730705
13 0.1134084178235
14 0.111242793259443
15 0.102204895645197
16 0.0996779424334193
17 0.0929597408886099
18 0.0909473726489974
19 0.0864208077715612
20 0.0802557801689899
21 0.0738382344603286
22 0.0706852969628426
23 0.061241709479437
24 0.0542975558733214
25 0.0379753730669815
26 0.00602664029525135
27 0.00321158184193016
28 0.00183874591463115
29 0.00107851373928053
30 0.000631005281628552
31 0.000370232575154267
32 0.000215864203244229
33 0.000245995229919084
34 0.000128964375539228
35 0.000172338816013724
36 8.19973060692735e-05
37 3.13930938577116e-05
38 2.50248649582234e-05
39 8.67507401447001e-06
40 1.3790822821334e-05
41 0.000110064859451316
42 5.00042912002744e-05
43 1.00872175561936e-05
44 1.42124392473847e-05
45 1.20443582569006e-05
46 1.26239222041274e-05
47 4.56549131769417e-06
48 3.23052239913781e-06
49 1.1783265064482e-05
50 1.71188714364055e-05
};
\addlegendentry{$\alpha=10$}
\end{axis}
\end{tikzpicture}
 \caption{Convergence rate of the procedure with different values of $\alpha$. All problems converge to a solution within an accuracy of $10^{-4}$ after 50 CCP iterations.}
 \label{fig:conv}
\end{figure}%

We plot the initial trajectories in Figure~\ref{fig:init}, where the red dots represent the expected position vector $x_t$ at time step $t$ and the red $\chi^2$ covariance ellipsoids represent 90\% certainty regions. 
Larger covariance ellipsoids indicate a lower perception cost.
We depict the initial and final points with teal and green dots.
We consider the boundaries as obstacles. 
The initial trajectory requires a very high perception cost as the covariance matrices around the trajectory is small in magnitude. 
In this example, we consider $W_t=10^{-2}I$ for all time-steps $t$.

The resulting trajectories after 50 CCP iterations are shown in Figures~\ref{fig:final001} and~\ref{fig:final1}. The trajectories are feasible, and also has a much larger uncertainty around the waypoints, which minimizes the perception cost.
However, the shortest path between the points of the trajectory with $\alpha=0.01$ is infeasible as it crosses the obstacles, and the resulting trajectory may be infeasible in practice with a higher probability than the one with $\alpha=1$. We also demonstrate both of the trajectories in the video\footnote{https://bit.ly/2UvCi26} to demonstrate the trade-offs of having different perception costs on a ground robot. The robot can follow the trajectory with $\alpha=1$ safely, whereas the trajectory with $\alpha=0.01$ results in the robot colliding with the obstacle.

To assess the robustness of the trajectories, we run a-posteriori Monte-Carlo analysis with different trade-offs of perception cost.
We perturb each element of the matrix $A_t$ with $N=10^6$ samples drawn from a zero-mean uniform distribution with an interval length of $10^{-2}$. 
If we apply the exact inputs that we obtained with $\alpha=0.01$ on the perturbed dynamics, the estimated probability of resulting path being infeasible is $0.2827$, even though the nominal trajectory is feasible.
On the other hand, if we apply the inputs with $\alpha=1$ on the same perturbed dynamics, the estimated probability of an infeasible path is $0.0234$, which results in a nominal trajectory that is more robust to the modeling errors.

The resulting perception and control costs with different trade-offs are shown in Figure~\ref{fig:cost}. 
The solid lines depict the perception costs, and the dashed lines depict the control costs. 
We observe that with all different trade-offs of the perception and control costs, the methods converge to a locally optimal solution within a few tens of iterations. 
As expected, the resulting perception costs decrease, and the control cost increase with an increasing value of $\alpha.$
We also plot the convergence rate with different values of $\alpha$ in Figure~\ref{fig:conv}, where $x$-axis shows the difference between the obtained state values $x$ and covariance values $P$ to the converged solution at each iteration, and the slopes of the figures indicate the rate of convergence. 
All problems with different values of $\alpha$ converge to a solution within an accuracy of $10^{-4}$ after $50$ iterations.

\section{Conclusions and Future Work}

We developed a distributed algorithm for solving the minimum-information linear-Gaussian control problem and applied the algorithm to the minimum-sensing path-planning problem. 
Our algorithm can scale to very large horizon lengths and runs in time linear with the horizon length. 
Future work involves establishing the optimal convergence rate given the problem data and determining the optimal parameter selection in different ADMM methods to achieve a faster convergence rate. 
We will also consider nonlinear and multi-agent systems, and integrate our approach to sampling-based methods for path-planning problems.

\bibliographystyle{ieeetr}
\bibliography{literature}

\end{document}